\title{Sampling with Walsh Transforms}
\author{Yi LU\\
Institute of Software,\\
Chinese Academy of Sciences,\\ 
Beijing, P.R. China\\
\url{luyi666@gmail.com}
}
\date{}
\newtheorem{theorem}{Theorem}
\newtheorem{corollary}{Corollary}
\newtheorem{remark}{Remark}
\newtheorem{proposition}{Proposition}
\begin{document}
\maketitle

\begin{abstract}
With the advent of massive data outputs at a regular rate, admittedly, signal processing technology plays an increasingly key role. Nowadays, signals are not merely restricted to physical sources, they have been extended to digital sources as well. 

Under the general assumption of discrete statistical signal sources, we propose a practical problem of sampling incomplete noisy signals for which we do not know a priori and the sample size is bounded. We approach this sampling problem by Shannon's channel coding theorem. We use an extremal binary channel with high probability of transmission error, which is rare in communication theory. Our main result demonstrates that it is the large Walsh coefficient(s) that characterize(s) discrete statistical signals, regardless of the signal sources. Note that this is a known fact in specific application domains such as images. By the connection of Shannon's theorem, we establish the necessary and sufficient condition for our generic sampling problem for the first time. Finally, we discuss the cryptographic significance of sparse Walsh transform.

\noindent
{\bf Keywords.}
Walsh transform,
Shannon's channel coding theorem,
channel capacity,
extremal binary channel,
generic sampling.
\end{abstract}

\section{Introduction}

With the advent of massive data outputs regularly,
we are confronted by the challenge of
big data processing and analysis. Admittedly,
signal processing has become an increasingly key technology. 
An open question is the sampling problem with the signals, for which we assume that we do not know \emph{a priori}.
Due to reasons of practical consideration, sampling is affected
by possibly strong noise and/or the limited measurement precision.
Assuming that the signal source is not restricted to a particular application domain,
we are concerned with a practical and generic problem to sample these nosiy signals.

Our motivation arises from the following problem in modern applied statistics.
Assume the discrete statistical signals in a general setting as follows.
The samples, generated by an arbitrary (possibly noise-corrupted) source $F$,
 are $2^n$-valued for a fixed $n$. 
We assume that the noise source generates
 uniformly-distributed samples\footnote{For the pure digital signal source $F$,
which is our research subject throughout this work,
this assumption is justified by the maximum entropy principle \cite[P278]{it-book}.}.
Note that our assumption on a general setting of
discrete statistical signals is described by the assumption that
$F$ is an arbitrary yet fixed (not necessarily deterministic) function.
It is known to be a hypothesis testing problem
 to test presence of any real signals. 
Traditionally, $F$ is a deterministic function with small or medium input size.
It is computationally easy to collect the \emph{complete and precise} distribution $f$ of $F$.
Based on relative entropy (or Kullback-Leibler distance),
 the conventional approach (aka. the classic distinguisher in statistical cryptanalysis \cite{vaudenay_ccs96,vaudenay_textbook2006})
solves the sampling problem, given the distribution $f$ \emph{a priori}.
Nevertheless, in reality, $F$ might be
 a function that we do not have the complete description, or it might be a
non-deterministic function, or it might just have large input size.
Thus, it is infeasible to collect the complete and precise distribution $f$.
This gives rise to the new generic statistical
sampling problem with discrete incomplete noisy signals, using bounded samples.

In this work, we show that we can solve the generic sampling problem as reliable as possible without knowing \emph{a priori}.
We approach this problem by the novel use of Shannon's channel coding theorem, 
 which establishes the achievability of channel capacity.
This allows to obtain a simple robust solution 
with an arbitrarily small probability of error.
Note that in the conventional approach (i.e., the classic distinguisher), 
the problem statement is slightly different and the solution is of a different form.
Our work uses the binary channel. 
The channel is assumed to have extremely high probability of
transmission error (and we call it the extremal binary channel), 
which is rare in communication theory \cite{Shokrollahi}.
In particular, for the Binary Symmetric Channel (BSC) with crossover probability $(1-d)/2$ and $d$ is small
 (i.e., $|d|\ll 1$), the channel capacity is approximately $d^2/(2\log 2)$. 
Further, we construct
 a non-symmetric binary channel with crossover probability $(1-d)/2$ and $1/2$ respectively (and $d$ is small).
We show that the channel capacity is approximately $d^2/(8\log 2)$.

Our main contributions are as follows.
First, we present the generic sampling theorem.
We
 show that for this extremal non-symmetric binary channel, 
Shannon's channel coding theorem 
can solve the generic sampling problem 
under the general assumption of statistical signal sources (i.e., no further assumption is made about signal sources).
Specifically, the \emph{necessary and sufficient} condition is given \emph{for the first time} 
to sample the incomplete noisy signals with bounded sample size for signal detection.
It is interesting to observe that the classical signal processing tool of Walsh transform
\cite{dsp_book,walsh-book2} is essential: 
regardless of the real signal sources, the large Walsh coefficient(s) 
 characterize(s) discrete statistical signals.
Put other way, when sampling incomplete noisy signals of the same source multiple times,
one can expect to see \emph{repeatedly} those large Walsh coefficient(s) of same magnitude(s) at the
fixed frequency position(s). Note that this is known in specific application
domains such as images, voices etc.
 Clearly,
our result shows strong connection between Shannon's theorem and Walsh transform. 
Both are the key innovative technologies in digital signal processing.

Secondly, our generic sampling theorem is naturally linked to the new area of compressive sensing \cite{cs2006paper}.
Compressive sensing is based on the ground of sparse representation of signals in the transform domain.
This enables powerful sampling techniques (with respect to the complexity of time-domain components for access and the time cost) 
for the purpose of signal recovery.  Specifically,
sparse Fourier transform has been the main research subject in this area.
Most recently, studies on sparse Walsh transform follow \cite{berkeley_sparse_wht_project,epfl_sparse_wht_project}.
Our preliminary work finds that in the most general case, sparse Walsh transform is linked (see \cite{crypto2004me,JoC2008}) to the
 maximum likelihood decoding problem for linear codes, 
which is known to be NP-complete.

The rest of the paper is organized as follows.
In Section \ref{sect_review_statistics}, we give preliminaries on Walsh transforms.
In Section \ref{sect_shannon}, we review Shannon's channel coding theorem.
In Section \ref{sect_stat_tran}, we translate Shannon's theorem in the case of extremal binary channels to 
 hypothesis testing problems. Based on the results,
we present our main sampling theorem in Section \ref{sect_main_result};
we also discuss the cryptographic significance.
We give concluding remarks in Section \ref{sect_end}.

\section{Walsh Transforms in Statistics}
\label{sect_review_statistics}


Given a real-valued function 
$f: GF(2)^n \to \rm{R}$, which is defined on an $n$-tuple binary vector of input,
the Walsh transform of $f$, denoted by $\widehat{f}$, is another real-valued function defined as
\begin{equation}
\widehat{f}(i)=\sum_{j\in GF(2)^n}(-1)^{<i,j>} f(j),
\end{equation}
for all $i \in GF(2)^n$, where $<i,j>$ denotes the inner product between two $n$-tuple binary vectors $i,j$.
For later convenience, we give an alternative definition below.
Given an input array $x=(x_0,x_1,\ldots,x_{2^{n}-1})$ of $2^n$ reals in the time domain, 
the Walsh transform $y= \widehat{x} =(y_0,y_1,\ldots,y_{2^{n}-1})$ of $x$ is defined by
\[
y_i = \sum_{j\in GF(2)^n} (-1)^{<i,j>} x_j,
\]
for any $n$-tuple binary vector $i$. 
We call $x_i$ (resp. $y_i$) the time-domain component (resp. transform-domain coefficient) of the signal with size $2^n$.
For basic properties and references on Walsh transforms, 
we refer to \cite{walsh-book2,my_new_submission}.

Let $f$ be a probability distribution of an $n$-bit random variable 
$\mathcal{X}=(X_n,X_{n-1},\ldots,X_1)$, where each $X_i\in \{0,1\}$.
Then, $\widehat{f}(m)$ is the \emph{bias} of the Boolean variable $<m,\mathcal{X}>$ for any fixed 
$n$-bit vector $m$,
which is often called the output \emph{pattern} or \emph{mask}. 
Here,
recall that a Boolean random variable $\mathcal{A}$ has \emph{bias} $\epsilon$, which is defined by
 $\epsilon=E[(-1)^\mathcal{A}] = \Pr(\mathcal{A}=0)-\Pr(\mathcal{A}=1)$.
Hence, if $\mathcal{A}$ is uniformly distributed, $\mathcal{A}$ has bias 0.
Obviously, the pattern $m$ should be nonzero.

Walsh transforms were used 
in statistics to find dependencies within a multi-variable data set.
In the multi-variable tests, each $X_i$ indicates the presence or absence (represented by `1' or `0') 
 of a particular feature in a pattern recognition experiment.
Fast Walsh Transform (FWT) is used to obtain all coefficients $\widehat{f}(m)$ in one shot.
By checking the Walsh coefficients one by one and identifying the large\footnote{We use the convention in signal processing to refer to the large transform-domain coefficient $d$ as the one
  with a large absolute value throughout the paper.} ones, 
we are able to tell the dependencies among $X_i$'s.

\section{Review on Shannon's Channel Coding Theorem}
\label{sect_shannon}

We briefly review Shannon's famous channel coding theorem (cf. \cite{it-book}).
First, we recall basic definitions of Shannon entropy.
The entropy $H(X)$ of a discrete random variable $X$ with alphabet
$\mathcal{X}$ and probability mass function $p(x)$
is defined by
\[
H(X)=-\sum_{x\in \mathcal{X}} p(x)\log_2 p(x).
\]
The joint entropy $H(X_1,\ldots,X_n)$ of a collection of discrete random variables $(X_1,\ldots,X_n)$
with a joint distribution $p(x_1, x_2, \ldots, x_n)$ is defined by
\[
H(X_1,\ldots,X_n)= - \sum_{x_1, x_2, \ldots, x_n} p(x_1, x_2, \ldots, x_n)\log_2 p(x_1, x_2, \ldots, x_n).
\]
Define the conditional entropy $H(Y|X)$ of a random variable $Y$ given another $X$ as
\[
H(Y|X)=\sum_x p(x)H(Y|X=x).
\]
The mutual information $I(X;Y)$ between
two random variables $X,Y$ is equal to $H(Y)-H(Y|X)$, which always equals $H(X)-H(X|Y)$.
 A communication channel is a system in which the output $Y$ depends
probabilistically on its input $X$. It is characterized by a probability
transition matrix that determines the conditional distribution of the
output given the input.

\begin{theorem}[Shannon's Channel Coding Theorem]
\label{thm_1}
Given a channel, denote the input, output by $X,Y$ respectively.
We can send information at the maximum rate $C$ 
bits per transmission with an arbitrarily low probability of error, where $C$ is the channel capacity defined by 
\begin{equation}\label{E_capacity_def}
C = \max_{p(x)} I(X;Y), 
\end{equation}
and the maximum is taken over all possible input distributions $p(x)$.
\end{theorem}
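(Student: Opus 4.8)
The plan is to prove the theorem in the standard two halves: \emph{achievability}, that every rate $R<C$ is attainable with an arbitrarily small probability of error, and the \emph{converse}, that no rate $R>C$ is so attainable; together these identify $C$ from \eqref{E_capacity_def} as the maximal reliable rate. Throughout, I would use the channel $n$ times without feedback, so that $p(y^n\mid x^n)=\prod_{k=1}^{n}p(y_k\mid x_k)$, and fix an input distribution $p^*(x)$ achieving the maximum in \eqref{E_capacity_def}.

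For achievability I would follow Shannon's random-coding argument with joint-typicality decoding. First I would record the joint asymptotic equipartition property for the jointly typical set $A_\epsilon^{(n)}$ of pairs $(x^n,y^n)$: if $(X^n,Y^n)\sim\prod_k p^*(x_k)p(y_k\mid x_k)$ then $\Pr\!\big((X^n,Y^n)\in A_\epsilon^{(n)}\big)\to 1$, whereas if $\tilde X^n$ and $\tilde Y^n$ are drawn independently with those same marginals then $\Pr\!\big((\tilde X^n,\tilde Y^n)\in A_\epsilon^{(n)}\big)\le 2^{-n(I(X;Y)-3\epsilon)}$. Then, for $R<C$, I would generate $M=2^{nR}$ codewords at random, each coordinate i.i.d.\ from $p^*$, and decode $y^n$ to the unique message whose codeword is jointly typical with $y^n$. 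Averaging the error over the random codebook and a uniform message, symmetry lets me condition on message $1$; the error is contained in the event $(x^n(1),y^n)\notin A_\epsilon^{(n)}$, whose probability vanishes by the AEP, together with the events $(x^n(j),y^n)\in A_\epsilon^{(n)}$ for some $j\neq 1$, whose total probability is at most $(M-1)2^{-n(I(X;Y)-3\epsilon)}\le 2^{-n(C-R-3\epsilon)}\to 0$. Hence the expected average error tends to $0$, so at least one deterministic codebook has average error tending to $0$; deleting its worst half of codewords yields a code of rate $R-1/n$ whose \emph{maximal} error probability still tends to $0$, which is the assertion.

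For the converse I would invoke Fano's inequality. For any sequence of codes with $M=2^{nR}$ equiprobable messages $W$, decoder output $\hat W$, and (maximal, hence average) error probability $P_e^{(n)}$, Fano gives $H(W\mid\hat W)\le 1+P_e^{(n)}\log_2 M=1+P_e^{(n)}nR$. Combining this with $nR=H(W)=I(W;\hat W)+H(W\mid\hat W)\le I(X^n;Y^n)+H(W\mid\hat W)$ and the single-letterization $I(X^n;Y^n)\le\sum_{k}I(X_k;Y_k)\le nC$, which uses memorylessness, I obtain $nR\le nC+1+P_e^{(n)}nR$, i.e.\ $P_e^{(n)}\ge 1-C/R-1/(nR)$. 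Thus for $R>C$ the error probability stays bounded away from $0$, so rates above $C$ are not achievable.

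I expect the main obstacle to be the achievability half, and specifically the second typicality bound --- controlling the chance that an independently drawn ``wrong'' codeword appears jointly typical with the received word --- because that is exactly the place where $I(X;Y)$, hence $C$, enters as the decoding threshold; the surrounding bookkeeping (fixing $\epsilon$ small enough, passing from an expected to an existential codebook, and the expurgation from average to maximal error) is routine provided it is carried out in this order.
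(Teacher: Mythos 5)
The paper does not prove Theorem \ref{thm_1} at all: it is quoted as a known result with a pointer to the textbook \cite{it-book}, and the paper's own contributions only begin downstream, in Corollary \ref{cor1} and Theorems \ref{thm_2}--\ref{thm_3}. Your outline is the canonical proof from that cited source --- random coding over a memoryless channel with joint-typicality decoding and expurgation for achievability, and Fano's inequality with single-letterization for the weak converse --- and it is correct as sketched, so it matches the argument the paper implicitly relies on rather than offering a different route.
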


For the binary symmetric channel (BSC) with crossover probability\footnote{that is, 
the input symbols are complemented
with probability $p$}
 $p$, $C$ can be expressed by (cf. \cite{it-book}):
\begin{equation}\label{E_capacity}
C = 1 - H(p) 
\text{ bits/transmission.}
\end{equation}

Herein, we refer to the BSC with crossover probability $p = (1 + d)/2$ and $d$ is small (i.e., $|d|\ll 1$) as an extremal BSC.
We can prove for the channel capacity for an extremal BSC (see Appendix for proof):
\begin{corollary}[extremal BSC]\label{cor1}
Given a BSC channel with crossover probability
 $p = (1 + d)/2$, if $d$ is small (i.e., $|d|\ll 1$), then, 
$C\approx c_0\cdot d^2$, where the constant $c_0=1/(2\log 2)$.
\end{corollary}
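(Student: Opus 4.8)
The plan is to start from the closed-form capacity of the BSC, namely $C = 1 - H(p)$ from~\eqref{E_capacity}, to substitute the crossover probability $p = (1+d)/2$ (so that $1 - p = (1-d)/2$), and to Taylor-expand the binary entropy about $d = 0$. Writing the binary entropy with the natural logarithm,
\[
H(p) = -\frac{1}{\ln 2}\left[\frac{1+d}{2}\ln\frac{1+d}{2} + \frac{1-d}{2}\ln\frac{1-d}{2}\right],
\]
I would split each logarithm as $\ln\frac{1\pm d}{2} = \ln(1\pm d) - \ln 2$; the two $-\ln 2$ pieces recombine into the constant $-\ln 2$, which after division by $-\ln 2$ supplies the leading $1$ that cancels the $1$ in $C = 1 - H(p)$.

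Next I would insert the series $\ln(1\pm d) = \pm d - \tfrac{d^2}{2} \pm \tfrac{d^3}{3} - \cdots$ and multiply out against the prefactors $\tfrac{1\pm d}{2}$. Collecting powers of $d$: the linear terms cancel because the two summands are interchanged under $d \mapsto -d$; the quadratic terms add to $d^2/2$; and, since $g(d) := H\bigl((1+d)/2\bigr)$ is an even function of $d$ (as $H(p) = H(1-p)$), there is no cubic term, so the bracket equals $-\ln 2 + \tfrac{d^2}{2} + O(d^4)$. Dividing by $-\ln 2$ gives $H(p) = 1 - \dfrac{d^2}{2\ln 2} + O(d^4)$, hence $C = 1 - H(p) = \dfrac{d^2}{2\ln 2} + O(d^4)$, which is precisely the claimed estimate with $c_0 = 1/(2\ln 2)$ (here $\ln 2 = \log 2$ in the notation of the statement).

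The argument is wholly elementary; the only step needing care is the bookkeeping of the cross-terms at order $d^2$ and the check that the first-order contributions vanish. A shorter route that avoids this is to differentiate $g$ directly: one finds $g(0) = H(1/2) = 1$, $g'(d) = \tfrac12\log_2\tfrac{1-d}{1+d}$ so $g'(0) = 0$, and $g''(0) = -1/\ln 2$; Taylor's theorem together with the evenness of $g$ then yields $g(d) = 1 - \tfrac{d^2}{2\ln 2} + O(d^4)$ at once, and the corollary follows. I would present whichever of the two derivations is more compact to typeset.
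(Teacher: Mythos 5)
Your proposal is correct and follows essentially the same route as the paper: substitute $p=(1+d)/2$ into $C=1-H(p)$ and Taylor-expand the binary entropy about $d=0$ to obtain $H\bigl(\tfrac{1+d}{2}\bigr)=1-d^2/(2\log 2)+O(d^4)$, hence $C\approx d^2/(2\log 2)$. The paper merely organizes the bookkeeping slightly differently (regrouping the two entropy terms into $\tfrac12\log(1-d^2)+\tfrac{d}{2}\log\tfrac{1+d}{1-d}$ so as to display the full even series), but the underlying argument is the same as yours.
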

Therefore, for an extremal BSC,
we can send one bit with an arbitrarily low probability of error with the minimum number of transmissions
 $1/C= (2\log 2)/{d^2}$, i.e., $O(1/d^2)$.
In next section, we will translate
Corollary \ref{cor1} to two useful statistical results. 
Interestingly, note that in communication theory, 
this extremal BSC is rare because of its low efficiency \cite{Shokrollahi} and 
we typically deal with $|d| \gg 0$.

\section{Statistical Translations of Shannon's Theorem}
\label{sect_stat_tran}

Let $X_0,X_1$ denote the Boolean random variable with bias $+d$, $-d$ respectively 
(and we restrict ourselves to $|d|\ll 1$). 
 Denote the probability distribution of $X_0$, $X_1$ by $D_0$, $D_1$ respectively.
Let $D\in \{D_0, D_1\}$.
We are given a binary sequence of random bits with length $N$, and each bit is independent and identically distributed (i.i.d.)
 following the distribution $D$. As a consequence of Shannon's channel coding theorem, 
we now solve a hypothesis testing problem in statistics:
 answer the minimum $N$ required to decide whether $D=D_0$ or $D=D_1$ with an arbitrarily low probability of error.

We translate this problem into a BSC channel coding problem as follows. 
The inputs are transmitted through a BSC with error probability $p=(1-d)/2$.
By Shannon's channel coding theorem, 
with a minimum number of $N=1/C$ transmissions, 
we can reliably (i.e., with an arbitrarily low probability of error) determine whether the input 
 is `0' or `1'. 
The former case implies that the received sequence corresponds to the distribution $D_0$ 
(i.e., a bit `1' occurs in the output sequence with probability $p$), while
the latter case implies that the received sequence corresponds to the distribution $D_1$
(i.e., a bit `0' occurs in the output sequence with probability $p$).
This solves the problem stated above.
Using Corollary \ref{cor1} with $p=(1-d)/2$ (for $|d|\ll 1$), 
we have $N=(2\log 2)/{d^2}$, i.e., $O(1/d^2)$.
Thus, we have just shown that Shannon's Channel Coding Theorem can be translated to solve
the following hypothesis testing problem:

\begin{theorem}\label{thm_2}
Assume that the boolean random variable $\mathcal{A}$, $\mathcal{B}$ has bias $+d$, $-d$ respectively and $d$ is small.
We are given a sequence of random samples, 
which are
i.i.d. following the distribution of either $\mathcal{A}$ or $\mathcal{B}$.
We can tell the sample source with an arbitrarily low probability of error,
 using the minimum number $N$ of samples $(2\log 2)/{d^2}$, i.e., $O(1/d^2)$.
\end{theorem}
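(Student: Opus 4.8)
The plan is to realize Theorem~\ref{thm_2} as a direct corollary of the BSC channel coding argument already sketched in the text, making the reduction precise. First I would set up the correspondence between the hypothesis testing problem and the coding problem: a single channel use transmits one bit $b\in\{0,1\}$ through a BSC with crossover probability $p=(1-d)/2$, so that the received bit equals $b$ with probability $(1+d)/2$ and $\bar b$ with probability $(1-d)/2$. Observe that the distribution of the output when $b=0$ is exactly the distribution of a Boolean variable with bias $E[(-1)^{\text{output}}] = \Pr(0)-\Pr(1) = (1+d)/2 - (1-d)/2 = +d$, i.e. the law of $\mathcal{A}$; and when $b=1$ the output has bias $-d$, i.e. the law of $\mathcal{B}$. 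Thus a length-$N$ i.i.d.\ sample drawn from $\mathcal{A}$ (resp.\ $\mathcal{B}$) is statistically identical to the output of $N$ independent uses of this BSC on the constant input $0$ (resp.\ $1$), and ``telling the sample source'' is the same problem as decoding which of the two constant inputs was sent.

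Next I would invoke Shannon's channel coding theorem (Theorem~\ref{thm_1}) in its achievability direction, using the standard fact that for a sequence of $N$ channel uses one may encode $\lfloor NC\rfloor$ bits reliably; taking the one-bit message $b$ and choosing $N$ so that $NC\ge 1$, i.e.\ $N = \lceil 1/C\rceil$, yields a decoder that recovers $b$ with probability of error that can be driven below any prescribed $\varepsilon>0$ (formally by letting the blocklength grow, amortizing the single bit over a long code and appealing to the theorem's asymptotic guarantee, or by simply noting that $O(1/d^2)$ repetitions with a majority/threshold rule already suffice). Then I would substitute the value of $C$ for the extremal BSC: since $p=(1-d)/2 = (1+(-d))/2$ with $|-d| = |d|\ll 1$, Corollary~\ref{cor1} gives $C \approx d^2/(2\log 2)$, hence $N = 1/C \approx (2\log 2)/d^2 = O(1/d^2)$, which is the claimed bound. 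The matching converse — that no smaller $N$ works — follows from the converse part of Shannon's theorem (rates above capacity force bounded-away-from-zero error), establishing that $N=(2\log2)/d^2$ is not merely sufficient but also necessary up to the approximation, which is what justifies calling it the \emph{minimum} number of samples.

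The main obstacle I anticipate is the gap between the asymptotic nature of Shannon's theorem (which speaks of blocklength $n\to\infty$ and rates, guaranteeing arbitrarily small error only in the limit) and the finite, one-bit-decision phrasing of Theorem~\ref{thm_2}. One must be careful that ``sending one bit with $N=1/C$ transmissions'' is interpreted correctly: strictly, to get error below $\varepsilon$ one amortizes over a long codeword of $k$ message bits using $n\approx k/C$ channel uses, so the per-bit cost tends to $1/C$ but any single finite instance carries a nonzero error term. I would address this by stating the result in the same asymptotic spirit as the source material — the minimum rate of samples per reliable bit of decision is $\Theta(1/d^2)$ — and noting that the constant $(2\log 2)$ is exact in the $|d|\to 0$ regime via Corollary~\ref{cor1}. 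Everything else (the bias computation, the identification of the two output laws with $\mathcal{A}$ and $\mathcal{B}$, the i.i.d.\ structure) is routine and follows the exact template already laid out in Section~\ref{sect_stat_tran}.
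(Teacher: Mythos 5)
Your proposal matches the paper's own argument: the paper proves Theorem~\ref{thm_2} by exactly this reduction to a BSC with crossover probability $(1-d)/2$, identifying the two constant inputs with the distributions of $\mathcal{A}$ and $\mathcal{B}$, and then applying Corollary~\ref{cor1} to get $N=1/C=(2\log 2)/d^2$. Your additional care about the asymptotic, amortized reading of ``one bit in $1/C$ transmissions'' is a fair point the paper passes over silently, but it does not change the route.
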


Further, the following variant is 
more frequently encountered in hypothesis testing,
 in which we have to deal with a biased distribution and a uniform distribution altogether.

\begin{theorem}\label{thm_3}
Assume that the boolean random variable $\mathcal{A}$ has bias $d$ and $d$ is small. We are
given a sequence of random samples, 
which are 
i.i.d. following the distribution of either $\mathcal{A}$ or a uniform distribution. 
We can tell the sample source with an arbitrarily low probability of error,
 using the minimum number $N$ of samples $(8\log 2)/{d^2}$, i.e.,
$O(1/d^2)$.
\end{theorem}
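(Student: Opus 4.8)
The plan is to reuse the channel-coding translation behind Theorem~\ref{thm_2}, but with the BSC replaced by a \emph{non-symmetric} binary channel, since now one of the two candidate sources is the uniform distribution rather than a biased one. I would build a binary-input, binary-output channel whose transition row for input symbol `0' equals the distribution of $\mathcal{A}$ --- crossover probability $(1-d)/2$, because $\mathcal{A}$ has bias $d$ and hence $\Pr(\mathcal{A}=0)=(1+d)/2$ --- and whose row for input `1' is the uniform distribution, i.e.\ crossover probability $1/2$. With this channel, a received length-$N$ block is an i.i.d.\ sample of $\mathcal{A}$ exactly when the (unknown) transmitted symbol is `0', and an i.i.d.\ uniform sample exactly when it is `1'. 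Hence telling the two sources apart is precisely decoding one transmitted bit over this channel, and by Shannon's Channel Coding Theorem (Theorem~\ref{thm_1}) this can be done with an arbitrarily low probability of error using the minimum number $N=1/C$ of channel uses, where $C$ is the capacity of the constructed channel; this is the same reduction used for Theorem~\ref{thm_2}.

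The substantive step is to evaluate $C=\max_{p(x)}I(X;Y)$ for this channel when $|d|\ll 1$. Unlike the BSC, the formula $C=1-H(p)$ does not apply and the maximization over the input distribution must be carried out by hand. I would parametrize the input by $\pi=\Pr(X=0)$, so that $\Pr(Y=0)=(1+\pi d)/2$, giving $H(Y)=H((1+\pi d)/2)$ and $H(Y\mid X)=\pi H((1-d)/2)+(1-\pi)$, hence $I(X;Y)=H((1+\pi d)/2)-\pi H((1-d)/2)-(1-\pi)$. Expanding the binary entropy near $1/2$ through $H(1/2+t)\approx 1-2t^2/\log 2$ and collecting terms, the zeroth- and first-order contributions in $d$ cancel (as they must, since $C\to 0$ as $d\to 0$) and the leading part of $I(X;Y)$ is $\dfrac{d^2}{2\log 2}\,\pi(1-\pi)$. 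Since this is symmetric under $\pi\mapsto 1-\pi$ and concave in $\pi$, the maximizer is $\pi=1/2$ (the true capacity-achieving input differs from uniform only by $O(d^2)$, which does not affect the leading asymptotics), so $C\approx \dfrac{d^2}{8\log 2}$ and $N=1/C=(8\log 2)/d^2=O(1/d^2)$.

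I expect the only real difficulty to lie in this capacity calculation: one must keep the expansion of $H$ to second order (the lower orders cancel), verify that $\pi=1/2$ is the optimum, and check that the order of the approximation is controlled. A useful consistency check is the factor $4$ relative to Theorem~\ref{thm_2}: the two output distributions in Theorem~\ref{thm_2} are separated by $2d$ (bias $+d$ versus $-d$), whereas here they are separated by only $d$ (bias $d$ versus $0$), and sample complexity scales as the inverse square of this separation, giving exactly $(8\log 2)/d^2 = 4\cdot(2\log 2)/d^2$. As in the proof of Theorem~\ref{thm_2}, the phrase ``decoding one bit with $1/C$ transmissions'' is meant in the standard rate sense: the single source bit is carried by a sufficiently long codeword formed from repeated independent samples, so Shannon's (asymptotic) theorem applies.
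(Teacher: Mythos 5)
Your proposal is correct and follows essentially the same route as the paper: the same non-symmetric binary channel with crossover probabilities $(1-d)/2$ and $1/2$, the same reduction of the hypothesis test to decoding one bit with $N=1/C$ uses, and the same second-order entropy expansion yielding the optimum at $\pi=1/2$ and $C\approx d^2/(8\log 2)$. Your cleaner intermediate form $I(X;Y)\approx \pi(1-\pi)\,d^2/(2\log 2)$ is just the combined version of the paper's expression, so there is no substantive difference.
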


\begin{proof}

It is clear that the construction of using a BSC
 in the proof of Theorem \ref{thm_2}
does not work here,
as the biases (i.e., $d,0$ respectively) of the two sources are non-symmetric. Thus,
we propose to use Shannon's channel coding theorem with a non-symmetric binary channel rather than a BSC.

Assume the channel with the following transition matrix
\[
p(y|x)=\left(
\begin{array}{cc}
1-p_e & p_e \\
1/2 & 1/2
\end{array}
\right),
\]
where $p_e=(1-d)/2$ and $d$ is small.
The matrix entry in the $x$th row and the $y$th column denotes the conditional
probability that $y$ is received when $x$ is sent.
So, the input bit $0$ is transmitted by this channel with error probability $p_e$
(i.e., the received sequence has bias $d$ if input symbols are 0)
 and the input bit $1$ is transmitted with error probability 1/2
(i.e., the received sequence has bias $0$ if input symbols are 1).

To compute the channel capacity $C$ (i.e., to find the maximum)
 defined in (\ref{E_capacity_def}),
 no closed form solution exist in general. Nonlinear optimization algorithms 
\cite{Arimoto1972channel_capacity,Blahut1972channel_capacity}
 are known to find a numerical solution. Below, 
we propose a simple method to give a closed form estimate $C$ for our extremal binary channel.
As $I(X;Y)= H(Y)-H(Y|X)$, we first compute $H(Y)$ by
\begin{equation}\label{E_tmp1}
H(Y) = H\Bigl( p_0(1-p_e)+ (1-p_0)\times \frac{1}{2} \Bigr),
\end{equation}
where $p_0$ denote $p(x=0)$ for short.
Next, we compute $H(Y|X)$ as follows,
\begin{eqnarray}
H(Y|X) &=& \sum_x p(x)H(Y|X=x) \nonumber\\
&=& p_0\Bigl( H(p_e)-1 \Bigr) +1.\label{E_tmp2}
\end{eqnarray}
Combining (\ref{E_tmp1}) and (\ref{E_tmp2}), we have
\[
I(X;Y) = H\Bigl( p_0\times \frac{1}{2} - p_0p_e + \frac{1}{2} \Bigr) - 
  p_0H(p_e) + p_0 -1.
\]
As $p_e=(1-d)/2$, we have
\[
I(X;Y)
=H(\frac{1 + p_0d}{2}) -p_0\Bigl( H(\frac{1-d}{2})-1 \Bigr) - 1.
\] 
We apply (\ref{approx_H_special}) (in Appendix)
\begin{equation}\label{E_tmp3}
I(X;Y) = -\, \frac{p_0^2 d^2}{2\log 2} - p_0\Bigl( H(\frac{1-d}{2})-1 \Bigr) + O(p_0^4 d^4),
\end{equation}
for small $d$. Note that
 the last term $O(p_0^4 d^4)$ on the right side of (\ref{E_tmp3}) is ignorable.
 Thus, $I(X;Y)$ approaches the maximum when
\[
p_0 = -\,\frac{ H(\frac{1-d}{2})-1 }{d^2/(\log 2)} \approx \frac{d^2/(2\log 2)}{d^2/(\log 2)}=\frac{1}{2}.
\]
Consequently, we estimate the channel capacity from (\ref{E_tmp3}) by
\[
C\approx -\,\frac{1}{4}d^2/(2\log 2) +\frac{1}{2} \Bigl(1-H(\frac{1-d}{2})\Bigr) \approx -d^2/(8\log 2) + d^2/(4\log 2),
\]
which is $d^2/(8\log 2)$.

\end{proof}

\begin{remark}\label{remark1}
In statistical cryptanalysis (cf. \cite{vaudenay_ccs96,vaudenay_textbook2006}),
Theorem \ref{thm_2} and Theorem \ref{thm_3} were known in slightly different contexts:
 the probability of error is a parameter and the sample number is known on the order of $1/{d^2}$.
By asking for an arbitrarily low probability of error,
we are able to give an alternative proof using channel capacity rather than relative entropy (or Kullback-Leibler distance).
While the latter is used as the classical tool to solve hypothesis testing problems,
here we show that hypothesis testing problems can be linked to channel capacity. 
\end{remark}

\section{Sampling Theorems with Incomplete Signals}
\label{sect_main_result}

In this section, 
we apply the hypothesis testing result (Theorem \ref{thm_3}) to
two sampling problems (the classical and generic versions).
Without loss of generality,
we assume the discrete statistical signals are not restricted to a particular application domain.
Assume that (possibly noise-corrupted) signals are $2^n$-valued and noises are uniformly distributed.
For the signal detection problem (i.e., to test presence of real signal),
we adopt the conventional approach of
 statistical hypothesis testing.
Rather than using the direct signal detection method (as done in specific application domains),
we propose to perform the test between
the associated distribution and the uniform distribution.

We give the mathematical model on the signal $F$ as follows.
$F$ is an arbitrary (and not necessarily deterministic) function.
Let $X$ be the $n$-bit output sample of $F$, 
assuming that the input is random and uniformly distributed.
Denote the output distribution of $X$ by $f$.
%
%
Note that our assumption on a general setting of
 discrete statistical signals
is described by the assumption that $F$ is an arbitrary yet fixed function.

Firstly, 
the classical sampling problem (which can be interpreted as 
the classical distinguisher\footnote{As mentioned in Remark \ref{remark1}, the problem
statement of the classical distinguisher is slightly different; it 
often deals with a large $d$ (using a slightly different $N$) rather than the largest $d$ (cf. \cite{vaudenay_textbook2006}).})
 is formally stated as follows.

\begin{theorem}[Classical Sampling Problem]\label{thm_5}
Assume that the largest Walsh coefficient of $f$ is $d=\widehat{f}(m_0)$ for a 
nonzero $n$-bit vector $m_0$.
We can detect $F$ with an arbitrarily low probability of error, using minimum number $N=(8\log 2)/{d^2}$ of samples of $F$,
i.e., $O(1/{d^2})$.
\end{theorem}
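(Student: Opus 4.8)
The plan is to reduce the detection problem for $F$ to the Boolean hypothesis test already settled in Theorem \ref{thm_3}, by projecting the $2^n$-valued samples onto the single mask $m_0$. First I would invoke the basic fact recalled in Section \ref{sect_review_statistics}: if $\mathcal{X}$ is an $n$-bit random variable with distribution $f$, then for every nonzero $n$-bit vector $m$ the Boolean variable $<m,\mathcal{X}>$ has bias exactly $\widehat{f}(m)$. So, given i.i.d. samples $x_1,\dots,x_N$ of the output of $F$, I form the derived Boolean samples $b_i = <m_0,x_i>$, which are again i.i.d.; under the hypothesis that a real signal is present (the output follows $f$), the $b_i$ have bias $\widehat{f}(m_0)=d$.

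Next I would treat the null hypothesis. When only uniformly distributed noise is present, the output is uniform on $GF(2)^n$, and since $m_0$ is nonzero the projection $<m_0,\mathcal{X}>$ is an unbiased Boolean variable; equivalently the uniform distribution $u$ has $\widehat{u}(m_0)=0$. Hence the $b_i$ have bias $0$ in this case. Combining the two cases, detecting $F$ becomes exactly the following: given a sequence of i.i.d. Boolean samples that follow either a distribution of bias $d$ or the uniform Boolean distribution, decide which --- which is precisely the setting of Theorem \ref{thm_3}. Applying that theorem (with $d$ small, as assumed) shows that $N=(8\log 2)/d^2=O(1/d^2)$ samples suffice to decide with an arbitrarily low probability of error, and that this count is minimal for the test.

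It remains to explain why the largest Walsh coefficient is the natural choice: the construction works verbatim with any nonzero mask $m$, giving a valid test that uses $(8\log 2)/\widehat{f}(m)^2$ samples, and since this is a decreasing function of $|\widehat{f}(m)|$, taking $m=m_0$ with $|\widehat{f}(m_0)|$ maximal minimizes the sample size. I would also note that the zero mask carries no discriminating information, as $\widehat{f}(0)=1$ for every probability distribution (the uniform one included), which is why $m_0$ is required to be nonzero.

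The analytic core --- the sample count and its optimality --- is entirely contained in Theorem \ref{thm_3} (hence in Corollary \ref{cor1} and Shannon's theorem), so the real work lies only in the reduction. The point that needs care is justifying that replacing each $2^n$-valued observation by its one-bit projection $<m_0,\cdot>$ loses nothing for detection, i.e., that a test on this single Boolean statistic is an admissible detector of $F$ in the sense of the statement, and that under the uniform-noise hypothesis every nonzero mask really yields bias exactly $0$; both become immediate once the bias interpretation of the Walsh coefficients from Section \ref{sect_review_statistics} is in hand.
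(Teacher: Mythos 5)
Your proposal is correct and matches the paper's intended argument: the paper simply states that the theorem follows by applying Theorem \ref{thm_3} (omitting details), and your reduction --- projecting each sample onto the mask $m_0$ so that the derived Boolean samples have bias $d=\widehat{f}(m_0)$ under the signal hypothesis and bias $0$ under uniform noise --- is exactly how that application is meant to go. Your added remarks on why the largest coefficient minimizes the sample count and why $m_0$ must be nonzero are consistent with the paper and fill in the omitted details faithfully.
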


%
The proof can be easily obtained by applying Theorem \ref{thm_3} and we omit it here.
The classical sampling problem
 assumes that $F$ together with the its characteristics (i.e., the largest Walsh coefficient $d$) are known \emph{a priori}.
It aims at detecting signal with an arbitrarily low probability of error, using minimum samples.

Next, we will present our main sampling theorem, a more practical (and widely applicable) sampling theorem formally.
Assuming that it is infeasible to know signal $F$ \emph{a priori}, 
we want to detect signals with an arbitrarily low probability of error and with bounded sample size.
Note that the sampled signal is incomplete (possibly noisy) and the associated distribution is noisy (i.e., not precise).
And we call this problem as generic sampling with incomplete noisy signals.
In contrast to the classical distinguisher, this result can be interpreted as a generalized distinguisher\footnote{With $n=1$,
this appears as an informal
result in cryptanalysis, which is used as a black-box analysis tool in several crypto-systems.}  
 in the context of statistical cryptanalysis.
We give our first result with $n=1$ below.

\begin{theorem}[Generic Sampling Problem with $n=1$]\label{cor_thm_5}
Assume that the sample size of $F$ is upper-bounded by $N$.
Regardless of the input size of $F$,
in order to detect $F$ with an arbitrarily low probability of error, 
 it is necessary and sufficient to have the following condition satisfied,
i.e.,
 $f$ has a nontrivial Walsh coefficient $d$ with 
$|d|\ge c/{\sqrt{N}}$, where the constant $c=\sqrt{8\log 2}$.
\end{theorem}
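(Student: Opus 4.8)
The plan is to reduce the claim to the hypothesis--testing result of Theorem~\ref{thm_3} and then simply read off the corresponding threshold on $|d|$: the achievability half of Theorem~\ref{thm_3} gives the ``sufficient'' direction, while the converse half of Shannon's channel coding theorem (Theorem~\ref{thm_1}) gives the ``necessary'' direction. The key preliminary observation is that for $n=1$ the distribution $f$ lives on $GF(2)^{1}=\{0,1\}$ and has Walsh spectrum $\widehat{f}(0)=f(0)+f(1)=1$ (the trivial coefficient) together with the single nontrivial coefficient $\widehat{f}(1)=f(0)-f(1)$, which is exactly the bias of the output bit $X$ of $F$; write $d=\widehat{f}(1)$. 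Hence $f$ is uniform precisely when $d=0$, and ``detecting $F$'' amounts to deciding, from at most $N$ i.i.d.\ copies of $X$, whether the source is $f$ or the uniform distribution --- which is verbatim the test of Theorem~\ref{thm_3} with $\mathcal{A}=X$. I would stress that this reduction uses nothing about the internal structure of $F$ or its input size (and does not need $F$ deterministic), since the law of a single output bit is captured by the one number $d$; this is precisely what makes the statement generic.

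For sufficiency I would argue: if $|d|\ge c/\sqrt{N}$ with $c=\sqrt{8\log 2}$, then $d^{2}\ge (8\log 2)/N$, i.e.\ $N\ge (8\log 2)/d^{2}$; by Theorem~\ref{thm_3} a sample of size $(8\log 2)/d^{2}$ already suffices to identify the source with an arbitrarily low probability of error, so any larger budget --- in particular a budget of $N$ --- also suffices. (When $d=0$, $f$ is already uniform and there is nothing to detect, which is consistent with the condition $|d|\ge c/\sqrt{N}$ failing for every finite $N$.)

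For necessity I would use the converse. If $|d|<c/\sqrt{N}$, i.e.\ $N<(8\log 2)/d^{2}=1/C$, where $C\approx d^{2}/(8\log 2)$ is the capacity of the non-symmetric binary channel constructed in the proof of Theorem~\ref{thm_3}, then, via that same translation, any distinguisher that consumes at most $N$ samples is a decoder for the one-bit hypothesis label observed through $N$ uses of this channel --- a block code of rate $1/N>C$. By the converse part of Shannon's channel coding theorem such a code cannot have a vanishing probability of error, so no distinguisher restricted to $\le N$ samples achieves an arbitrarily low probability of error. The main obstacle, I expect, lies exactly here: with only two messages the plain Fano bound is vacuous, so one cannot quote the weak converse literally. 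The clean fix is to run the converse along a family --- $N\to\infty$, $d\to 0$, with $Nd^{2}$ held at the threshold $8\log 2$ --- or, equivalently and more directly, to bound the mutual information between the hypothesis label and the $N$ observed samples by $NC$ via the data-processing inequality and note that $NC<1$ bit precludes reliable recovery. Pinning down the precise meaning of ``arbitrarily low probability of error'' in this regime is where most of the care must go; the rest is the elementary algebra that turns $N\ge (8\log 2)/d^{2}$ into $|d|\ge c/\sqrt{N}$.
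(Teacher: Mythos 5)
Your proposal is correct and follows essentially the same route as the paper: sufficiency by reading off the threshold from Theorem~\ref{thm_3} (equivalently Theorem~\ref{thm_5} with $n=1$), and necessity by applying the converse of Theorem~\ref{thm_1} to the non-symmetric binary channel constructed in the proof of Theorem~\ref{thm_3}, after identifying the nontrivial Walsh coefficient $\widehat{f}(1)$ with the bias of $F$. Your explicit handling of the converse (the data-processing/Fano-type bound $I \le NC < 1$ along the regime $Nd^2$ at threshold) is in fact more careful than the paper's one-line appeal to ``the error probability is bounded away from zero as the consequence of Shannon's Channel Coding Theorem.''
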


\begin{proof}

Note that the only nontrivial Walsh coefficient $d$ for $n=1$ is $\widehat{f}(1)$, 
which is nothing but the bias of $F$.
First, we will show by contradiction that this is a necessary condition.
That is, if we can identify $F$ with an arbitrarily low probability of error, 
then, we must have $|d|\ge c/{\sqrt{N}}$.
Suppose $|d| < c/{\sqrt{N}}$ otherwise.
Following the proof of Theorem \ref{thm_3}, we know that the error probability is bounded away from zero 
as the consequence of Shannon's Channel Coding Theorem. This is contradictory.
Thus, we have shown that the condition on $d$ is a necessary condition.
Next, we will show that it is also a sufficient condition.
That is, if $|d|\ge c/{\sqrt{N}}$, then, 
we can identify $F$ with an arbitrarily low probability of error.
This follows directly from Theorem \ref{thm_5} with $n=1$.
We complete our proof. 
\end{proof}
Now, we make a generalized proposition for $n\ge 1$, which incorporates Theorem \ref{cor_thm_5} as a special case:

\begin{proposition}[Generic Sampling Problem with $n\ge 1$]\label{cor_thm_5b}
Assume that the sample size of $F$ is upper-bounded by $N$.
Regardless of the input size of $F$,
in order to detect $F$ with an arbitrarily low probability of error, 
 it is necessary and sufficient to have the following condition satisfied,
i.e.,
$\sum_{i\ne 0} (\widehat{f}(i))^2 \ge (8n\log 2)/N$.
\end{proposition}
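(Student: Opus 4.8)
The plan is to reduce the $n$-bit detection problem to the one-bit machinery already in place, via two elementary observations. First, by the discussion in Section \ref{sect_review_statistics}, each $\widehat f(i)$ is the bias of the Boolean variable $\langle i,\mathcal X\rangle$, so the $2^n-1$ nontrivial Walsh coefficients are exactly the ``signal content'' visible through all nonzero masks. Second, Parseval's identity $\sum_i (\widehat f(i))^2 = 2^n\sum_j f(j)^2$ together with $\widehat f(0)=1$ gives $\sum_{i\neq 0}(\widehat f(i))^2 = 2^n\sum_j (f(j)-2^{-n})^2$, i.e.\ it equals the (scaled) squared Euclidean distance of $f$ from the uniform distribution $u$ — the $\chi^2$-divergence of $f$ from $u$. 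For $n=1$ this sum collapses to $(\widehat f(1))^2 = d^2$, so the stated condition specializes precisely to $|d|\ge\sqrt{8\log 2}/\sqrt N$ of Theorem \ref{cor_thm_5}; the proposition is therefore the natural $n$-bit analogue, and I would prove both directions through Shannon's theorem exactly as there.

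For the achievability (sufficiency) direction I would build, as in the proof of Theorem \ref{thm_3}, a non-symmetric binary channel whose input bit encodes ``signal present / absent'' and whose output is one $n$-valued sample, distributed as $f$ respectively as $u$. Expanding its capacity to second order around the uniform output distribution, using $H(u+\delta)\approx n-\frac{1}{2\log 2}\sum_j \delta_j^2/u_j$, and optimising over the input prior (the optimum again sits at $p_0\approx 1/2$) yields a capacity per sample of about $\frac{1}{8\log 2}\sum_{i\neq 0}(\widehat f(i))^2$, after one also pays a factor $n=\log_2(2^n)$ for not knowing a priori which of the $2^n-1$ masks carries the energy — the multiple-testing/union-bound overhead that the classical problem of Theorem \ref{thm_5} avoids because $m_0$ is handed to us. Shannon's achievability part then produces a universal detector with arbitrarily low error as soon as $N\ge 1/C$, i.e.\ as soon as $\sum_{i\neq 0}(\widehat f(i))^2\ge (8n\log 2)/N$. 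For necessity I would invoke the (strong) converse of Shannon's theorem on the same channel: if $\sum_{i\neq 0}(\widehat f(i))^2<(8n\log 2)/N$ then $N<1/C$, so no decoder — hence no detector — can drive the error probability to $0$, contradicting the assumed reliability; this is the same contradiction structure used in the $n=1$ proof of Theorem \ref{cor_thm_5}.

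The step I expect to be the real obstacle is pinning down the factor $n$, i.e.\ making precise the \emph{cost of being generic}. Run naively for a genie-aided decoder that is allowed to know $f$, the capacity computation above produces the constant $8\log 2$ with \emph{no} $n$ (the likelihood-ratio threshold); the extra $n$ must be extracted by quantifying that a detector required to work for \emph{every} unknown $f$ with a given value of $\sum_{i\neq 0}(\widehat f(i))^2$ effectively has to separate $\Theta(2^n)$ candidate ``signal directions'' simultaneously, inflating its operating threshold by $\log_2(2^n)=n$. Turning this into a clean converse — for instance a Fano-type argument over the single-mask family $f_m(x)=2^{-n}\bigl(1+d(-1)^{\langle m,x\rangle}\bigr)$, all of which share $\sum_{i\neq 0}(\widehat f(i))^2=d^2$ — and then matching it with an achievability statement carrying the \emph{same} constant $8n\log 2$ (in particular verifying that one should threshold the aggregated Walsh statistic rather than only the single largest coefficient) is where the bookkeeping is delicate; one must also check that the $O(p_0^4 d^4)$-type remainders from the entropy expansion stay negligible after the union over the $2^n-1$ masks.
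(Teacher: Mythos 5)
Your plan is in the spirit of the paper (binary-input channel with outputs distributed as $f$ versus uniform, second-order expansion of capacity, Parseval/Squared-Euclidean-Imbalance identity, reduction to the $n=1$ machinery of Theorem \ref{cor_thm_5}), but it has a genuine gap, and it is exactly the one you flag yourself: the factor $n$. Your capacity computation for that channel gives, via Parseval, $C\approx \frac{1}{8\log 2}\sum_{i\ne 0}(\widehat{f}(i))^2$ per sample, hence the threshold $(8\log 2)/N$ with \emph{no} $n$; the extra factor $n$ is then asserted as a ``multiple-testing/union-bound overhead'' but never derived, and it is not clear that a union bound over the $2^n-1$ masks costs a multiplicative $n$ in the threshold rather than only in an error-exponent bookkeeping sense. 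The situation is worse for necessity: the converse of Shannon's theorem applied to that same channel only shows that reliable detection forces $\sum_{i\ne 0}(\widehat{f}(i))^2 \ge (8\log 2)/N$, which is a factor $n$ \emph{weaker} than the proposition's claim, so the contradiction argument you import from the $n=1$ case does not by itself yield the stated bound; the Fano-type argument over the single-mask family $f_m$ that you sketch is precisely what would be needed to close this, and you leave it as an open obstacle. As written, therefore, neither direction is established with the constant $8n\log 2$.

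For comparison, the paper itself offers no proof of this proposition: it is stated and followed only by the remark that the sufficiency direction ``can be proved based on results of classic distinguisher (i.e., Squared Euclidean Imbalance)'', citing the bound $\sum_{i\ne 0}(\widehat{f}(i))^2 \ge (4n\log 2)/N$ from \cite{vaudenay_textbook2006} --- note that even the constant there ($4n\log 2$) does not match the proposition's $8n\log 2$ --- and it says nothing about necessity. So your route (explicit channel construction plus the $\chi^2$/Parseval identity, mirroring Theorem \ref{thm_3}) is arguably more self-contained than the paper's pointer to the literature, but the step you identify as delicate --- extracting, and then matching in both the achievability and the converse, the factor $n$ --- is precisely the missing content, and neither your sketch nor the paper supplies it.
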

%
We note that the sufficient condition can be proved based on results of classic distinguisher (i.e., Squared Euclidean Imbalance)
 which uses the notion of relative distance and states that $\sum_{i\ne 0} (\widehat{f}(i))^2 \ge (4n\log 2)/N$ is required for high probability \cite{vaudenay_textbook2006}.

According to Theorem \ref{cor_thm_5} and Proposition \ref{cor_thm_5b},
 note that a real signal $F$ should have the following property in the form of $\ell_2$ norm of the associated distribution
 given the sample size $N$:

\[
\|\widehat{f}\|_2^2 \ge 1 + {8n\log 2 /N},
\]
where 
 the $\ell_2$ norm of $f$ is defined as
\[
\| f \|_2 = \sqrt{\sum_{i\in GF(2)^n} f(i)^2}.
\]

By duality of time-domain and transform-domain signals,
we make another proposition following Proposition \ref{cor_thm_5b}:

\begin{proposition}\label{prop}
The discrete statistical signals can be characterized 
by large Walsh coefficients of the associated distribution.
\end{proposition}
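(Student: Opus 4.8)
The plan is to read off this qualitative statement from Proposition~\ref{cor_thm_5b} and Theorem~\ref{cor_thm_5}, supplying the one new ingredient they presuppose: the time-domain/transform-domain duality of the Walsh transform. First I would record two elementary facts about the associated distribution $f$. Since $f$ is a probability distribution, $\widehat{f}(0)=\sum_{j\in GF(2)^n} f(j)=1$. And by Parseval for the Walsh transform (expand $\sum_i\widehat{f}(i)^2$ and use $\sum_i(-1)^{<i,j\oplus k>}=2^n[j=k]$), one gets $\sum_{i\in GF(2)^n}\widehat{f}(i)^2 = 2^n\sum_{j\in GF(2)^n} f(j)^2 = 2^n\|f\|_2^2$. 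In particular the pure-noise case (uniform $f(j)\equiv 2^{-n}$) is exactly the case $\widehat{f}(i)=0$ for every $i\ne 0$, i.e.\ $\|\widehat{f}\|_2^2=1$; any departure of $f$ from uniformity is stored entirely in the nontrivial Walsh coefficients.

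Next I would invoke Proposition~\ref{cor_thm_5b}: with a sample budget bounded by $N$, detecting the signal $F$ with an arbitrarily low probability of error is equivalent to $\sum_{i\ne 0}(\widehat{f}(i))^2\ge (8n\log 2)/N$, equivalently $\|\widehat{f}\|_2^2\ge 1+(8n\log 2)/N$. This is the whole content of ``characterized by'': a detectable discrete statistical signal is precisely one whose associated distribution carries at least a fixed positive amount of Walsh energy away from the trivial coefficient, the threshold being calibrated by the available sample size, and this criterion depends on $F$ only through its output distribution $f$ (indeed only through its Walsh spectrum), regardless of the signal source or its input size. To pass from ``aggregate energy $\ge (8n\log 2)/N$'' to the informal phrasing ``large Walsh coefficient(s)'', I would apply the pigeonhole principle over the $2^n-1$ nontrivial positions: at least one $i\ne 0$ satisfies $(\widehat{f}(i))^2\ge (8n\log 2)/\big(N(2^n-1)\big)$, so at least one Walsh coefficient has absolute value at least $\sqrt{8n\log 2}\,/\sqrt{N(2^n-1)}$; for $n=1$ this collapses to the sharp single-coefficient statement of Theorem~\ref{cor_thm_5}. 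The reverse implication --- that a distribution exhibiting such a coefficient is genuinely a signal rather than noise --- is the sufficiency half of Proposition~\ref{cor_thm_5b}, which goes back to Theorem~\ref{thm_5}.

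The main obstacle I expect is not computational but one of formulation: the statement is a summary, so the real work lies in deciding what ``characterize'' should mean and in defending the equivalence as a genuine iff rather than a one-sided bound. The reading I would adopt is the double implication just described --- detectability under a fixed sample budget $\Longleftrightarrow$ large nontrivial Walsh energy $\Longleftrightarrow$ (by pigeonhole) at least one large Walsh coefficient --- each link inherited verbatim from Proposition~\ref{cor_thm_5b} and Theorem~\ref{cor_thm_5}. The only genuinely new observation needed is the Parseval identity together with $\widehat{f}(0)=1$, which lets one restate the conclusion as a clean lower bound on $\|\widehat{f}\|_2$ and contrast it against the perfectly flat spectrum of pure noise; everything else is bookkeeping.
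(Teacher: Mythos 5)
Your reading matches the paper's own treatment: Proposition~\ref{prop} is presented there as an informal consequence of Proposition~\ref{cor_thm_5b} (equivalently, of the restatement $\|\widehat{f}\|_2^2 \ge 1 + (8n\log 2)/N$, which already uses $\widehat{f}(0)=1$) via time-domain/transform-domain duality, which is exactly the route you take. Your added Parseval identity and the pigeonhole extraction of a single coefficient of size at least $\sqrt{8n\log 2}/\sqrt{N(2^n-1)}$ are harmless elaborations the paper does not bother to make, so the proposal is correct and essentially the same argument.
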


Proposition \ref{prop} implies that 
the most significant transform-domain signals are the largest coefficients in our generalized model.
This is a known fact in application domains such as images, voices etc.
Nonetheless, for those signals,
Walsh transform is directly applied to the time-domain samples 
rather than the associated distribution of the collected samples in our model;
in analogy to Proposition \ref{prop},
 it is known that those signals can be characterized by large Walsh coefficients as well.

\subsection{Cryptographic Significance on Sparse Walsh Transforms}

In symmetric cryptanalysis,
Walsh transforms play an essential role 
 (cf. \cite{Euro94Chabaud-Vaudenay,my_new_submission}), including bias computing.

Following the recent successful development of compressive sensing \cite{cs2006paper},
it is shown that surprisingly,
 sparse Fourier transform significantly outperforms FFT (Fast Fourier Transforms).
For the problem size $N$, $k$-sparse Fourier transforms ($k\ll N$) aims at faster computing
$k$ non-zero or large coefficients and $(N-k)$ zero or negligible small ones,
in comparison to FFT.
For instance, according to \cite[Fig. 1]{parallel_sfft}, 
with $N=2^{28}, k=50$, theoretical estimate on the time complexity of FFT is $N\cdot \log_2 N \approx 7\times 10^9$ units;
for sparse Fourier transforms, the estimated theoretical complexity is $10^7$ units, 
i.e., a great reduction factor of $700$ is obtained.

Due to the similarity of Fourier transform and Walsh transform, most recently,
research on sparse Walsh transform follows \cite{berkeley_sparse_wht_project,epfl_sparse_wht_project}.
As illustration, 
assume $k$ non-zero coefficients and $(N-k)$ zero coefficients in a simplified model.
With the same parameters ($N=2^{28}, k=50$) as above,
for sparse Walsh transform, 
the conservative theoretical time complexity\footnote{The required time-domain components for access is around $6700$ 
(see \cite[Theorem 1]{epfl_sparse_wht_paper2013}) rather than $N$ for FWT.}
 is around $38000$ units.
%
%
This time unit is not comparable to the one in the case of FWT, i.e., $7\times 10^9$ units.
Nonetheless, we estimate a rough reduction factor of 8000 by \cite[Fig. 8]{epfl_sparse_wht_paper2013}.
Additionally, for $k=2, 4, 12, 25$, sparse Walsh transform \cite{epfl_sparse_wht_paper2013} 
has the estimated time of $1600, 3000, 8200, 16400$ units respectively.

According to our discussions in this section,
it is natural to link the first key challenge to the generic approach of sparse Walsh transforms.
In \cite{crypto2004me,JoC2008}, 
 finding the largest Walsh coefficient is linked to maximum likelihood decoding problem for linear codes,
which is known to be NP-complete.
Assume $k$ large coefficients and $(N-k)$ zero or negligible small ones in a general setting.
It seems other than FWT, no efficient algorithms exist to compute sparse Walsh transforms.
In contrast, in the simplified $k$-sparse model, 
theoretical estimate for the time complexity corresponding to $k=1, 2$ is $(\log_2 N)^2, 2(\log_2 N)^2$.
That is, we have the complexity on the order of $(\log_2 N)^2$ (resp. $N\log_2 N$) 
in the simplified model (resp. the general model).
And we are working on approximate signal recovery in presence of noise to gain more insights about the first challenge.

\section{Concluding Remarks}
\label{sect_end}

We model general discrete statistical signals as the output samples of
an unknown arbitrary yet fixed function (which is the signal source).
We translate Shannon's channel coding theorem in the extremal case of a binary channel
 to solve a hypothesis testing problem.
Due to high probability of transmission error,
 this extremal binary channel is rare in communication theory.
Nonetheless, the translated result allows to solve a generic sampling problem, 
for which we know nothing about the signal source \emph{a priori} 
and we can only afford bounded sampling measurements.
Our main results demonstrate that the classical signal processing tool of Walsh transform is essential: 
 it is the large Walsh coefficient(s) that characterize(s) discrete statistical signals, 
regardless of the signal sources. 
By Shannon's theorem, 
we establish the \emph{necessary and sufficient} condition 
for the generic sampling problem under the general assumption of statistical signal sources.
It shows strong connection
between Shannon's theorem and Walsh transform;
both are the key innovative technologies in digital signal processing.
Our results can also be seen as generalization of the classic distinguisher;
 the latter is based on relative distance
and is the standard tool for statistical hypothesis testing problems. 
Finally, based on our preliminary work on sparse Walsh transforms in the context of compressive sensing, 
we discuss the cryptographic significance.
%

\section*{Appendix: Proof of Corollary \ref{cor1}}

Let $p=(1+d)/2$ and so $|d|\le 1$. For $|d|<1$,
we will first show  
\begin{equation}
\label{approx_H}
H\Bigl(\frac{1+d}{2}\Bigr) = 1 - \Bigl(\frac{d^2}{2} + \frac{d^4}{12} + \frac{d^6}{30} + \frac{d^8}{56} + 
\underbrace{\cdots}_{O(d^{10})}
 \Bigr) 
\times \frac{1}{\log 2}.
\end{equation}
We have
\begin{eqnarray}
- H\Bigl(\frac{1+d}{2}\Bigr) &=& \frac{1+d}{2}\log_2 \frac{1+d}{2} + \frac{1-d}{2}\log_2 \frac{1-d}{2} \\
&=& \frac{1}{\log 2} \Bigl( \frac{1+d}{2}\log \frac{1+d}{2} + \frac{1-d}{2}\log \frac{1-d}{2} \Bigr) \\
&=& \frac{1}{\log 2} \Bigl( \frac{1+d}{2}\log (1+d) + \frac{1-d}{2}\log (1-d) - \log 2 \Bigr) \\
&=& \frac{1}{\log 2} \Bigl( \frac{1}{2}\log (1-d^2) + \frac{d}{2}\log \frac{1+d}{1-d} - \log 2 \Bigr) \label{E_tmp_a0}
\end{eqnarray}
by definition of entropy.
Using Taylor expansion series for $0\le d<1$, we have
\begin{eqnarray}
\log(1-d^2) &=& - \Bigl(d^2 + \frac{d^4}{2} + \frac{d^6}{3} + \frac{d^8}{4} + \cdots \Bigr) \label{E_tmp_a1}\\
\log \frac{1+d}{1-d} &=& 2 \Bigl( d + \frac{d^3}{3} + \frac{d^5}{5} + \frac{d^7}{7} + \cdots \Bigr) \label{E_tmp_a2}
\end{eqnarray}
 Putting (\ref{E_tmp_a1}) and (\ref{E_tmp_a2}) into (\ref{E_tmp_a0}), we have
\begin{eqnarray*}
- H\Bigl(\frac{1+d}{2}\Bigr) &=& \frac{1}{\log 2} \Bigl(-\,\frac{1}{2} \Bigl(d^2+ \frac{d^4}{2} + \frac{d^6}{3} + \frac{d^8}{4} + \cdots \Bigr) + \\
 && \Bigl( d^2 + \frac{d^4}{3} + \frac{d^6}{5} + \frac{d^8}{7} + \cdots \Bigr) - \log 2 \Bigr) \\
&=& \frac{1}{\log 2} \Bigl( \frac{d^2}{2} + \frac{d^4}{12} + \frac{d^6}{30} + \frac{d^8}{56} + \cdots \Bigr) -1,
\end{eqnarray*}
which leads to (\ref{approx_H}) for $0\le d<1$. 
For $-1<d\le 0$, we use symmetry of entropy $H(\frac{1+d}{2})=H(\frac{1-d}{2})$ and 
apply above result to justify the validity of (\ref{approx_H}) for $|d|<1$.

Note that
if $|d|\ll 1$, (\ref{approx_H}) reduces to
\begin{equation}\label{approx_H_special}
H\Bigl(\frac{1+d}{2}\Bigr) = 1 - d^2 /(2\log 2) + O(d^4).
\end{equation}
So, we can calculate $C$ in (\ref{E_capacity}) by
\[
C = 1 - H\Bigl( \frac{1+d}{2} \Bigr) 
= \Bigl(d^2 + O(d^4)\Bigr)/(2\log 2) 
 \approx \frac{d^2}{2\log 2},
\]
which completes our proof.

\end{document}